\documentclass[journal, 10pt]{IEEEtran}
%\IEEEoverridecommandlockouts
  % Needed if you want to use the \thanks command
%\overrideIEEEmargins
  % Needed to meet printer requirements.
\bibliographystyle{IEEEtran}
\usepackage{amsmath}                
\usepackage{amssymb}  
\usepackage{psfrag,graphicx}
\usepackage{epsfig}
\usepackage{tikz}
\usepackage{multicol}
\usepackage{blindtext}        
\newtheorem{theorem}{\textbf{Theorem}}

\newtheorem{props}{\textbf{Proposition}}
\newtheorem{defin}{\textbf{Definition}}

\newtheorem{problem}{\textbf{Problem}}

\def\R{\mathbb{R}}
\def\E{\mathbf{E}}

\def\F{\mathbf{F}}
\def\G{\mathbf{G}}
\def\H{\mathbf{H}}
\def\B{\mathbf{B}}

\def\N{\mathcal{N}}
\def\QED{~\rule[-1pt]{6pt}{6pt}\par\medskip}
\newenvironment{proof}{{\bf Proof.\ }}{ \hfill \QED}  

\title{\LARGE \bf Kalman meets Shannon: Optimal Linear Estimation over Gaussian Communication Channels}
\author{Ather Gattami\\
Ericsson Research\\
F\"ar\"ogatan 6, Stockholm, Sweden\\ 
E-mail: ather.gattami@ericsson.com\\[8mm]
%\textit{Dedicated to  A. Rantzer and B. Bernhardsson}\\
%\textit{on the occasion of their 50th Birthday}\\[2mm] 
}

\begin{document}
\maketitle

\begin{abstract}
We consider the problem of communicating the state of a dynamical system via
a Shannon Gaussian channel. The receiver,
which acts as both a decoder and estimator, observes the noisy measurement of the channel
output and makes an optimal estimate of the state of the dynamical system in
the minimum mean square sense. The transmitter observes a possibly noisy measurement of the state of the dynamical system. These measurements are then used to encode the message to be transmitted over a noisy Gaussian channel, where a \textit{per sample} power constraint is imposed on the transmitted message.
Thus, we get a mixed problem of Shannon's source-channel coding problem and a sort of Kalman filtering problem.
We first consider the problem of communication with full state measurements at the transmitter and show that optimal linear encoders don't need to have memory and the optimal linear decoders have an order of at most that of the state dimension. We also give explicitly the structure of the optimal linear filters. For the case where the transmitter has access to noisy measurements of the state, we derive a separation principle for the optimal communication scheme, where the transmitter needs a filter with an order of at most the dimension of the state of the dynamical system.  The results are derived for first order linear dynamical systems, but may be extended to MIMO systems with arbitrary order.
\end{abstract}

%\begin{keywords}
%Communication, Gaussian channel, per sample power constraints, estimation, MMSE, dynamical systems, Kalman filter.
%\end{keywords}

\section{Introduction}
\subsection{Background}
This paper studies the problem of communicating the state of a dynamical system via
a Shannon Gaussian channel. The receiver,
which acts as both a decoder and estimator, observes the noisy measurement of the channel
output and makes an optimal estimate of the state of the dynamical system in
the minimum mean square sense. The transmitter observes a possibly noisy measurement of the state of the dynamical system. These measurements are then used to encode the message to be transmitted over a noisy Gaussian channel, where a \textit{per sample} power constraint is imposed on the transmitted message.

Shannon (\cite{shannon:48, shannon1949}) considered the problem of reliable communication of a one-dimensional source over a one-dimensional Gaussian channel. In particular, Shannon considered the  following coding-decoding setting for an analog Gaussian channel:
$$
\inf_{\substack{ f:\R\rightarrow \R\\ g:\R\rightarrow \R\\ \E |g(x)|^2 \leq P}}
\E|x - f(g(x)+n)|^2 
$$
where $x\sim \mathcal{N}(0,X)$ and $n\sim \mathcal{N}(0,N)$. Shannon showed
that the infimum can be attained by using linear encoder and decoder $g$ and $f$, respectively.

%-------------------------------------------------------------------------------------------------------------------------------------

\begin{figure}
	\center
  	\includegraphics[width = 1\columnwidth]{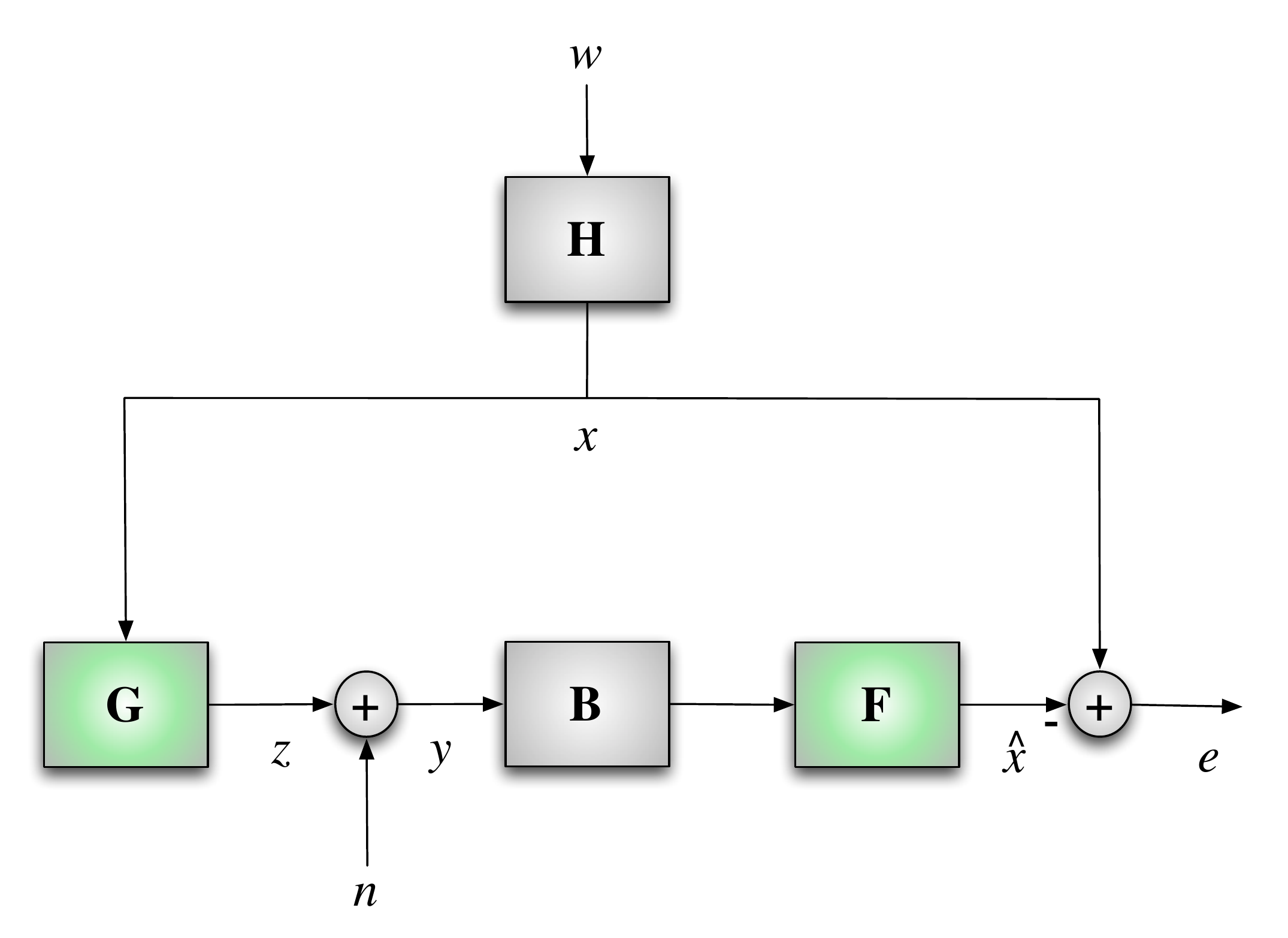}	
	\caption{A simple model of an estimation problem of the state of the dynamical system $\H$ over a Gaussian communications channel with Gaussian noise $n\sim\mathcal{N}(0,N)$, and delay given by the backward shift operator 
	$\B$. The optimization parameters are given by the encoder $\G$ and the decoder $\F$. The samples of the encoder output $z$ are power limited with $\E|z(t)|^2 \leq P$. }
	\label{mainFig}
\end{figure}

%-------------------------------------------------------------------------------------------------------------------------------------

More specifically, consider the block-diagram in Fig. \ref{mainFig}. We have the process noise given by $w$, which is assumed to be Gaussian white noise, and the state
is given by $x = \H w$ where $\H$ is a causal linear operator/filter. \\
The precoder is given by the causal operator
$\G$, not necessarily linear. The encoded signal $z = \G x$ is then transmitted over a Gaussian channel with white noise given by $n$. Typically, one has power constraints on the transmitted signal $z(t)$, that is $\E |z(t)|^2 \leq P$, for some positive real number $P$. At the other end, the message received is $y(t) = z(t)+n(t)$, for $t=0, ..., T-1$, and is delayed with one time step by the backward shift operator $\B$. Finally, the causal operator $\F$ is the decoder, designed to reconstruct the state $x$ by $\hat{x}=\F \B y$, to minimize the mean squared error $\E|e|^2 = \E |x-\hat{x}|^2$. 

For the case where $\G$ is a fixed linear operator, the optimal filter $\F$ is well known to be given by the optimal Kalman filter, which is a linear operator. However, if $\G$ is a precoder to be co-designed together with $\F$, we get a nonconvex problem even if we restrict the optimization problem to be carried out over linear operators/filters. Prior to this work, it has not been known whether the order of the linear optimal filters is finite and what the upper bound is.

\subsection{Previous work}
\cite{kalman:1960} made a fundamental contribution to optimal control and filtering of linear dynamical systems by deriving recursive state space solutions. The model considered by Kalman assumes linear measurements of the state, possibly partial and corrupted by noise. The role of a communication channel \textit{with feedback} and its affect on stability was studied in \cite{tatikonda:2004} and necessary conditions for stability were given. Fundamental limitations of performance were studied in \cite{martins:2008}. In \cite{charalambous:2008}, the capacity was matched to achieve a certain distortion, which required a time-varying power constraint that cannot be fixed beforehand. 
The problem of \textit{linear} communication and filtering over a noisy channel for the stationary case has been considered in \cite{joh10acc} where it was shown that this problem can be transformed to a convex optimization problem that grows with the size of the time horizon. However, the order of the  linear optimal filters obtained from \cite{joh10acc} is infinite. This paper presents similar results to \cite{gattami:ifac:2014} with complete proofs.

\subsection{Contributions}
We consider the linear dynamical system $\mathbf{H}$ given by
\begin{align*} 
	x(t+1) 	&= ax(t) + bw(t) \\
	\gamma(t)	&= cx(k) + dv(t)
\end{align*}

The main contribution of this paper is to show that the optimal linear filters $\F$ and $\G$ in the communication scheme described in Fig. \ref{mainFig}(the figure shows the case $c=1$, $d=0$) have a finite order independent of the size of the time horizon. 
We also show explicitly the structure of the optimal filters, which is given by

\begin{equation*}
 %\label{nofbsseq}
 \G: \hspace{0.5cm} 
 \left\{\begin{aligned}
 	\breve{x}(t) &= \E\{x(t) | \gamma^t\}\\		
	z(t) &= \frac{\sqrt{P}}{\sigma_t} \breve{x}(t), ~~\sigma_t^2=  \E|\breve{x}(t)|^2\\
 \end{aligned}\right. 
\end{equation*}
 %where $\sigma_t^2=  \E|\breve{x}(t)|^2$,
\begin{equation*}
 %\label{nofbsseq}
 \hspace{-1.4cm} \F: \hspace{0.8cm} 
 \begin{aligned}
	\hat{x}(t) 	&= \E\{x(t) | y^{t-1}\}
 \end{aligned}
 \end{equation*}
%for $t= 1, ..., T$.
The interpretation is that the transmitter estimates the state of the dynamical system which is given by a Kalman filter, and then transmit a scaled version of that estimate to satisfy the given power constraint.
The receiver's optimal strategy is to estimate the \textit{transmitter's estimate} of the state of the dynamical system. Thus, the order of the filters is bounded by the dimension of the state and not the time horizon.

%----------------------------------------------------------------------
\subsection{Notation}
\begin{tabular}{ll}
%$\mathbb{S}$ & The set of symmetric matrices.\\
%$\mathbb{S}_{+}$& The set of symmetric positive\\
%& semidefinite matrices.\\
%$\mathbb{S}_{++}$& The set of symmetric positive definite matrices.\\
%$\succeq$ & $A\succeq B$ $\Longleftrightarrow$ $A-B\in \mathbb{S}_{+}$.\\
%$\succ$ & $A\succ B$ $\Longleftrightarrow$ $A-B\in \mathbb{S}_{++}$.\\
%$A_i$ 		& Denotes the $i$th row of the matrix $A$.\\
%$|A|$ 		& $|A|^2 = \mathbf{Tr} AA^*$.\\
$x^t$		& $x^t = (x(0), x(1), ..., x(t))$.\\
$\mathbb{L}$ & The set of lower triangular matrices.\\
%$\mathbf{Tr}$& $\mathbf{Tr}[A]$ is the trace of the matrix $A$.\\
%$\mathbf{diag}$& $\mathbf{diag}(a_1, ..., a_m)$ is a block-diagonal matrix \\
%			& with $a_1, ..., a_m$ along its diagonal.\\
$\B$  	& Denotes the backward shift operator,\\ 
		& $x(t-1) = \B x(t)$.\\			
$\mathbf{E}\{\cdot\}$ 	& $\mathbf{E} \{x\}$ denotes the expected value of the\\ 
			& stochastic variable $x$.\\
$\mathbf{E}\{\cdot|\cdot\}$ 	& $\mathbf{E} \{x|y\}$ denotes the expected value of the\\
			& stochastic variable $x$ given $y$.\\
$\mathbf{cov}$ & $\mathbf{cov}\{x,y\} = \E\{xy^\intercal \}$.\\
$h(x)$ 		& Denotes the entropy of $x$.\\
$h(x|y)$ 		& Denotes the entropy of $x$ given $y$.\\
$I(x;y)$ 		& Denotes the mutual information between\\
			& $x$ and $y$.\\

%$\oplus$ & $A=\oplus \sum A_i $ is the direct sum of the matrices $\{A_i\}$.\\
%         &The matrix $A$ is a block diagonal matrix with $\{A_i\}$ on its diagonal.\\
%$\mathbb{K}$ & $\{K\in \mathbb{R}^{m\times p}| K=\oplus \sum K_i, K_i\in \mathbb{R}^{m_i\times p_i}\}$\\
%$\mathcal{C}$ & The set of functions $\mu:\mathbb{R}^p\rightarrow \mathbb{R}^m$ with\\
%              &  $\mu(y)=(\mu_1^T(y_1),\mu_2^T(y_2), ..., \mu_N^T(y_N))^T$,\\
%& $\mu_i:\mathbb{R}^{p_i}\rightarrow \mathbb{R}^{m_i}$,
%$\sum_{i} m_i=m$, $\sum_{i} p_i=p$.\\
$\mathcal{N}(m,V)$  & Denotes the set of Gaussian variables with\\ 
				& mean $m$ and covariance $V$.\\
\end{tabular}

%========================================================================

\section{Preliminaries}

\begin{defin} 
The entropy of a real-valued stochastic variable $x$ with probability distribution $p(\cdot)$ is defined as
$$
h(x) = -\int_{-\infty }^{\infty } p(z) \log_2 p(z) dz
$$
\end{defin}

\begin{defin}
For two real valued stochastic variables $x$ and $y$, the conditional entropy of $x$ given $y$ is
defined as
$$
h(x|y) = h(x,y) - h(y).
$$
\end{defin}

\begin{defin}
The mutual information between $x$ and $y$ is defined as
$$
 I(x; y) = h(x) - h(x|y) = h(y) - h(y|x).
$$
\end{defin}

\begin{props}[Entropy Power Inequality]
\label{entpineq}
	If $x$ and $y$ are independent scalar random variables, then
	$$
	2^{2h(x+y)} \geq 2^{2h(x)} + 2^{2h(y)}
	$$
	with equality if $X$ and $Y$ are Gaussian stochastic variables. 
\end{props}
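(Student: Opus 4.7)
The plan is to follow the classical Stam--Blachman approach in three stages: introduce the Fisher information, prove a Fisher information analogue of the inequality, and lift the result back to differential entropy via the heat equation. For a scalar variable $x$ with smooth density $p$, define the Fisher information
$$J(x) = \int_{-\infty}^{\infty} \left(\frac{p'(z)}{p(z)}\right)^2 p(z)\, dz.$$
The first step is to establish Stam's inequality for independent scalar $x$ and $y$:
$$\frac{1}{J(x+y)} \geq \frac{1}{J(x)} + \frac{1}{J(y)}.$$
I would prove this by writing the score of $x+y$ at a point as the conditional expectation, given $x+y$, of any convex combination of the individual scores of $x$ and $y$, and then applying Cauchy--Schwarz with the weight optimized over the convex combination.

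The second step invokes de Bruijn's identity: if $z\sim\mathcal{N}(0,1)$ is independent of $x$, then $x_t = x + \sqrt{t}\,z$ has a density satisfying the heat equation, and a direct integration by parts yields
$$\frac{d}{dt}\, h(x_t) = \frac{1}{2}\, J(x_t).$$
In the third step, I would smooth $x$ and $y$ independently by adding $\sqrt{\alpha t}\, z_1$ and $\sqrt{(1-\alpha)t}\, z_2$ with $\alpha\in(0,1)$, differentiate each of $2^{2h(x_t)}$, $2^{2h(y_t)}$, and $2^{2h(x_t+y_t)}$ using de Bruijn's identity, and choose $\alpha$ so that Stam's inequality forces
$$\Delta(t) = 2^{2h(x_t+y_t)} - 2^{2h(x_t)} - 2^{2h(y_t)}$$
to be monotone in $t$. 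Since all three smoothed variables become asymptotically Gaussian with variances growing linearly in $t$, and the EPI holds with equality for proportional Gaussians, the sign of $\Delta$ is pinned down in the limit $t\to\infty$, and monotonicity then transports it back to $t=0$, giving $\Delta(0)\geq 0$.

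The main technical obstacle is Stam's inequality: de Bruijn's identity is a direct heat-equation computation and the heat-flow bookkeeping is routine once the scaling is fixed, but the score-of-the-convolution-as-conditional-expectation representation together with the Cauchy--Schwarz optimization requires care. The equality case then traces back to that Cauchy--Schwarz step: equality forces the score functions of $x$ and $y$ to be affine, which is precisely the characterization of Gaussian distributions and recovers the stated equality condition.
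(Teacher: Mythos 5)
The paper does not prove this proposition; it is quoted as a classical result with a pointer to Cover and Thomas, and the route you sketch (Fisher information, Stam's convolution inequality, de Bruijn's identity, heat-flow monotonicity) is exactly the Stam--Blachman proof given in that reference. So at the level of strategy you are reconstructing the intended argument, and the two hard ingredients you isolate -- the representation of the score of $x+y$ as a conditional expectation of a convex combination of the individual scores, followed by Cauchy--Schwarz, and the identity $\frac{d}{dt}h(x_t)=\frac12 J(x_t)$ -- are the right ones.

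There is, however, one step in your plan that would fail as written: the claim that a \emph{fixed} $\alpha\in(0,1)$ can be chosen so that Stam's inequality forces $\Delta(t)=2^{2h(x_t+y_t)}-2^{2h(x_t)}-2^{2h(y_t)}$ to be monotone. Differentiating with de Bruijn gives
$$\Delta'(t)=\ln 2\left[\,J(x_t+y_t)\,2^{2h(x_t+y_t)}-\alpha J(x_t)\,2^{2h(x_t)}-(1-\alpha)J(y_t)\,2^{2h(y_t)}\right],$$
and to bound the first term via $J(x_t+y_t)\le J(x_t)J(y_t)/(J(x_t)+J(y_t))$ in a way that dominates the other two, the weight must be taken as $\alpha=J(y_t)/(J(x_t)+J(y_t))$ (or an equivalent normalization), which depends on $t$. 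This is precisely why Blachman's proof injects noise at rates $f'(t),g'(t)$ determined by an ODE tied to the current Fisher informations, rather than linearly in $t$; alternatively one reduces first to the equivalent statement $h(x+y)\ge h(x^G+y^G)$ for Gaussians matched in entropy. Your limiting argument at $t\to\infty$ and the tracing of the equality case through Cauchy--Schwarz (affine scores $\Rightarrow$ Gaussian) are fine; note only that in dimension one any pair of independent Gaussians achieves equality, whereas the general vector statement requires proportional covariances.
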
 
\begin{proof}
	See \cite{cover:2006}, p. 674 - 675.
\end{proof}

\begin{defin}
	Random variables $x, y, z$ are said to form a Markov chain in that order if the conditional 
	distribution of $z$ depends only on $y$ and conditionally independent of $x$. This is denoted by
	$x\rightarrow y \rightarrow z$.
\end{defin}

\begin{props}[Data-Processing Inequality]
\label{dataproc}
	If $$x\rightarrow y \rightarrow z,$$ 
	then $$I(x; z)\leq I(y; z).$$	
\end{props}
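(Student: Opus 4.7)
The plan is to establish the inequality by decomposing the joint mutual information $I((x,y); z)$ in two different ways via the chain rule, and then exploiting the Markov structure to cancel one of the resulting terms. First, I would write
$$
I((x,y); z) = I(y; z) + I(x; z \mid y),
$$
by conditioning on $y$ first, and also
$$
I((x,y); z) = I(x; z) + I(y; z \mid x),
$$
by conditioning on $x$ first. These are the two standard chain-rule expansions of mutual information.

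Next, I would invoke the defining property of the Markov chain $x \rightarrow y \rightarrow z$: conditional on $y$, the variable $z$ is independent of $x$. In terms of conditional differential entropy this says $h(z \mid x, y) = h(z \mid y)$, and hence $I(x; z \mid y) = h(z \mid y) - h(z \mid x, y) = 0$. Equating the two decompositions above then yields
$$
I(y; z) = I(x; z) + I(y; z \mid x).
$$
Since conditional mutual information is non-negative (it is an expected Kullback--Leibler divergence), $I(y; z \mid x) \geq 0$, and the claimed inequality $I(x; z) \leq I(y; z)$ follows immediately.

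The only delicate point worth attention, rather than a genuine obstacle, is that the entropies in the preliminaries are defined as differential entropies for real-valued variables; unlike discrete entropy, differential entropy can be negative, so the chain rule and the non-negativity of $I(y; z \mid x)$ should be justified at the level of mutual information (which is always non-negative as a relative entropy) rather than via entropy differences alone. This is standard material and I would simply refer to \cite{cover:2006} for the chain-rule and non-negativity statements, mirroring the treatment already used for Proposition \ref{entpineq}.
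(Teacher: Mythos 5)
Your argument is correct and is precisely the standard chain-rule proof of the data-processing inequality found in the reference the paper cites for this proposition (the paper itself offers no independent proof, only the pointer to \cite{cover:2006}, pp.~34--35). Your closing remark about working at the level of mutual information rather than differential entropy differences is a sensible precaution and consistent with how the cited text handles the continuous case.
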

\begin{proof}
	See \cite{cover:2006}, p. 34-35.
\end{proof}

\begin{props}
\label{lse}
Let  $x$ and $y$ be two stochastic variables. The optimal solution to the optimization problem
$$
\inf_{f(\cdot)} \E |x-f(y)|^2
$$
is unique and given by the expectation of $x$ given $y$
\[
\begin{aligned}
f_\star(y) = \E\{x | y\} .
\end{aligned}
\]
Furthermore, $f_\star(y)$ and $x-f_\star(y)$ are uncorrelated.  

\end{props}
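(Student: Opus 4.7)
The plan is to prove this by the classical add-and-subtract trick: insert $\E\{x|y\}$ inside the squared error and expand. Specifically, I would write
\[
\E|x-f(y)|^2 = \E\bigl|(x-\E\{x|y\}) + (\E\{x|y\} - f(y))\bigr|^2,
\]
and expand the square into three terms: $\E|x-\E\{x|y\}|^2$, $\E|\E\{x|y\}-f(y)|^2$, and twice the cross term $\E[(x-\E\{x|y\})(\E\{x|y\}-f(y))]$. The whole proof then reduces to showing the cross term vanishes.

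To kill the cross term, I would use the tower property of conditional expectation. Since $\E\{x|y\}-f(y)$ is a function of $y$ alone, conditioning on $y$ gives
\[
\E\bigl[(x-\E\{x|y\})(\E\{x|y\}-f(y))\bigr] = \E\Bigl[(\E\{x|y\}-f(y))\,\E\{x-\E\{x|y\}\,|\,y\}\Bigr] = 0,
\]
because $\E\{x-\E\{x|y\}\,|\,y\} = \E\{x|y\} - \E\{x|y\} = 0$. Consequently,
\[
\E|x-f(y)|^2 = \E|x-\E\{x|y\}|^2 + \E|\E\{x|y\}-f(y)|^2 \;\geq\; \E|x-\E\{x|y\}|^2,
\]
with equality if and only if $f(y)=\E\{x|y\}$ almost surely, which simultaneously yields optimality and uniqueness (in the $L^2$ sense).

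For the last claim, I would apply the same tower-property trick with $f_\star(y)=\E\{x|y\}$ playing the role of the $y$-measurable factor:
\[
\E\bigl[f_\star(y)(x-f_\star(y))\bigr] = \E\bigl[f_\star(y)\,\E\{x-f_\star(y)\,|\,y\}\bigr] = \E\bigl[f_\star(y)\cdot 0\bigr] = 0,
\]
so $f_\star(y)$ and $x-f_\star(y)$ are uncorrelated. I do not expect any serious obstacle; the only subtlety is a technical measurability/integrability caveat, namely restricting attention to measurable $f$ with $\E|f(y)|^2<\infty$ so that all expectations above are well defined and the cross term may legitimately be computed by iterated expectation.
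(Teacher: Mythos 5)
Your proof is correct: the paper itself gives no argument for this proposition, merely citing Shiryaev, and the standard textbook proof found there is exactly your orthogonal-decomposition/tower-property argument, including the observation that uniqueness holds only in the almost-sure ($L^2$) sense. No gaps.
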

\begin{proof}
Consult (\cite{shiryaev}, p. 237).
\end{proof}
\begin{props}
\label{logerror}
Consider the stochastic variables $x$ and $y$, and let the estimation error of $x$ based on $y$
be given by
$$
\tilde{x} = x - \E\{x|y\}.
$$
Then,
\begin{equation}
	\label{lowerb}
		\frac{1}{2}\log_2{(2\pi e \E \{\tilde{x}^2 \})} \geq h(x|y) = h(\tilde{x})
\end{equation}
with equality if and only if $x$ and $y$ are jointly Gaussian. 
\end{props}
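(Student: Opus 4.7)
The plan is to establish the claim by the standard chain
\[
\tfrac{1}{2}\log_2\bigl(2\pi e\,\E\{\tilde{x}^2\}\bigr)\;\geq\; h(\tilde{x})\;\geq\; h(\tilde{x}\mid y)\;=\; h(x\mid y),
\]
assembled from three elementary ingredients: translation invariance of differential entropy, ``conditioning reduces entropy,'' and the Gaussian maximum-entropy property. Both intermediate inequalities collapse to equalities precisely in the jointly Gaussian case, which simultaneously delivers the bound and its equality condition.

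First I would dispatch the rightmost equality. Because $\E\{x\mid y\}$ is a deterministic (measurable) function of $y$, the conditional density of $\tilde{x}=x-\E\{x\mid y\}$ given $y=y_0$ is just the conditional density of $x$ given $y=y_0$ shifted by the constant $\E\{x\mid y_0\}$, and differential entropy is translation invariant. Hence $h(\tilde{x}\mid y=y_0)=h(x\mid y=y_0)$ pointwise in $y_0$, and averaging against the marginal of $y$ yields $h(\tilde{x}\mid y)=h(x\mid y)$.

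The two remaining steps are textbook facts from Cover \& Thomas: (i) $h(\tilde{x}\mid y)\leq h(\tilde{x})$, with equality if and only if $\tilde{x}$ and $y$ are independent; and (ii) for any scalar random variable $v$ with variance $\sigma^2$, $h(v)\leq \tfrac{1}{2}\log_2(2\pi e\sigma^2)$, with equality if and only if $v$ is Gaussian. To apply (ii) I would note that $\E\{\tilde{x}\}=\E\{x\}-\E\{\E\{x\mid y\}\}=0$ by the tower property, so $\mathrm{Var}(\tilde{x})=\E\{\tilde{x}^2\}$ and (ii) directly supplies the Gaussian upper bound in the form stated.

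Finally I would analyze the equality case. If $(x,y)$ is jointly Gaussian, then $\E\{x\mid y\}$ is affine in $y$, so $\tilde{x}$ is a linear combination of jointly Gaussian variables and hence Gaussian; by Proposition~\ref{lse} it is uncorrelated with $y$, and jointly Gaussian uncorrelated variables are independent, forcing equality in both (i) and (ii). The converse is the step I expect to require the most care: equality alone yields a Gaussian $\tilde{x}$ independent of $y$, and one must then argue that the pair $(x,y)=(\tilde{x}+\E\{x\mid y\},y)$ is jointly Gaussian, which ultimately relies on $y$ being Gaussian as well\,---\,a point worth spelling out explicitly in a fully rigorous write-up.
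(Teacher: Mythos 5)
Your proof is correct, and it is essentially the standard argument: the paper offers no proof of its own here, deferring entirely to a citation (El Gamal and Kim, p.~21), and the chain you assemble --- translation invariance giving $h(\tilde{x}\mid y)=h(x\mid y)$, conditioning reduces entropy, and the Gaussian maximum-entropy bound --- is exactly the textbook route that reference takes. Two remarks. First, your handling is actually more careful than the proposition as stated: the displayed claim $h(x\mid y)=h(\tilde{x})$ is not an unconditional identity but holds only when $\tilde{x}$ is independent of $y$; your chain correctly records it as $h(\tilde{x})\geq h(\tilde{x}\mid y)=h(x\mid y)$, which is all that is needed (and all that is used later in the paper, where the relevant variables are jointly Gaussian). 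Second, the caveat you raise about the converse is well founded and worth stating more strongly: equality in the chain is equivalent to $\tilde{x}$ being Gaussian and independent of $y$, and this does \emph{not} imply that $(x,y)$ is jointly Gaussian --- take $y$ non-Gaussian and $x=g(y)+n$ with $n$ Gaussian independent of $y$, so that $\tilde{x}=n$; equality holds but the pair is not jointly Gaussian. So the ``only if'' direction of the proposition is too strong as literally stated, and your instinct to flag it rather than claim it is the right one.
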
  
\begin{proof}
Consult \cite{gamal:nit}, p. 21.
\end{proof}

%========================================================================

\section{Problem Formulation}
\subsection{Optimal Filtering over Noisy Communication Channel with Full State Measurements at The Transmitter}

Let $\mathbf{H}$ be a first order linear time invariant dynamical system  with state-space realization

\begin{align} 
\label{linearsystem}
	x(t+1) 	&= ax(t) + bw(t) , ~~~~~~~ x(0)=0, ~~~0\leq t\leq T-1,
\end{align}
where $a,b \in\R$ %, $b\neq 0$, and $w$ is the process noise.  
and $w$ is assumed to be white Gaussian noise with $w(t)\sim \N(0,1)$ for all $0\leq t\leq T-1$. 

The precoder is a map $\G: x^t\mapsto z(t)$, where $z$ is the signal transmitted over the Gaussian channel. We have a power constraint on the transmitted signal $z(t)$ given by 
$\E |z(t)|^2 \leq P$. 

The measurements at the decoder are given by $y(0) :=0$ and 
$$
y(t) = z(t)+n(t), ~~~~~ \text{for }t\geq 1,
$$
where $n$ is a Gaussian white noise process with  $n(t)\sim \mathcal{N}(0, N)$. The decoder is a map
$\F: y^{t-1}\mapsto \hat{x}(t)$.

The objective is to design causal precoder and decoder maps $\G: x^t\mapsto z(t)$ and $\F: y^{t-1}\mapsto \hat{x}(t)$, respectively,  such that the average of the mean squared error is minimized:
$$
\frac{1}{T}\sum_{t=1}^T \E|x(t)-\hat{x}(t)|^2 \rightarrow \min.
$$
The precoder and decoder maps can be equivalently written as a causal dynamical system according to
 \begin{equation}
 \label{nonlin0}
 \begin{aligned}
 z(t) 		&= g_t(x^t), ~~~ \E\{z^2(t)\}\leq P\\
 \hat{x}(t) 	&= f_t(y^{t-1}),
 \end{aligned}
 \end{equation}
where $g_t$ is the precoder and $f_t$ is the decoder.

Now we may formalize our first problem statement:

\begin{problem}
\label{p1}
Consider the linear dynamical system
\begin{align*} 
	x(t+1) 	&= ax(t) + bw(t) , ~~~~~~~ x(0)=0, ~~~0\leq t\leq T-1,
\end{align*}
where $a,b \in\R$ %, $|a|<1$, $b\neq 0$, 
and $w(t)\sim \N(0,1)$ for $0\leq t\leq T-1$.
Let $n$ be a Gaussian white noise process independent of $w$, with  $n(t)\sim \mathcal{N}(0, N)$. Find an optimal precoder and decoder pair (\ref{nonlin0}) such that
$$
\frac{1}{T}\sum_{t=1}^T \E|x(t)-\hat{x}(t)|^2 \rightarrow \min,
$$
where $y(0) = 0$ and
$
y(t) = z(t)+n(t),  ~%~~~~ 
\text{for } t\geq 1.
$
 \end{problem}

%------------------------------------------------------------------------------------------------------------------------------------

\subsection{Linear optimal precoder/decoder design}
The linear filter $\H$ has the following Toeplitz matrix representation over the time 
$t = 1, ..., T$:
\begin{align} 
	\begin{bmatrix}
		x(1)\\ 	
		x(2)\\
		x(3)\\
		\vdots \\
		x(T)
	\end{bmatrix}
	&= 
	 \begin{bmatrix}
  		b 		& 0 		& 0 	 	&\cdots 	& 0 \\
  		ab 		& b 		& 0	 	&\cdots	& 0\\
  		a^2b 	& ab		& b 	 	&\cdots 	& 0 \\
  		\vdots 	& \vdots	& \vdots 	&\ddots 	& \vdots \\
		a^Tb		& a^{T-1}b& a^{T-2}b&\cdots 	& b 
 	\end{bmatrix} 
	\begin{bmatrix}
		w(0)\\ 	
		w(1)\\
		w(2)\\
		\vdots \\
		w(T-1)
	\end{bmatrix}
\end{align}

Let the precoder $\G$ be a causal linear filter that maps
$x$ to $z$:
\begin{align} 
	\begin{bmatrix}
		z(1)\\ 	
		z(2)\\
		z(3)\\
		\vdots \\
		z(T)
	\end{bmatrix}
	&= 
	 \begin{bmatrix}
  		G_{11} 		& 0 		& 0 	 	&\cdots 	& 0 \\
  		G_{21} 		& G_{22} 		& 0	 	&\cdots	& 0\\
  		G_{31} 	& G_{32}		& G_{33} 	 	&\cdots 	& 0 \\
  		\vdots 	& \vdots	& \vdots 	&\ddots 	& \vdots \\
		G_{T1}	& G_{T2}& G_{T3}&\cdots 	& G_{TT} 
 	\end{bmatrix}
	\begin{bmatrix}
		x(1)\\ 	
		x(2)\\
		x(3)\\
		\vdots \\
		x(T)
	\end{bmatrix}
\end{align}

The precoder is subject to a power constraint on its output signal
$z=\G x$ given by $\E|z(t)|^2\leq P$, for $t= 0, ..., T.$ 

The decoder  $\F$ is a causal linear filter that observes the delayed measurements with $y(0):=0$ and $y(t) = z(t)+n(t)$ for $t\geq 1$. It has the following linear operator representation:
\begin{align} 
	\begin{bmatrix}
		\hat{x}(1)\\ 	
		\hat{x}(2)\\
		\hat{x}(3)\\
		\vdots \\
		\hat{x}(T)
	\end{bmatrix}
	&= 
	 \begin{bmatrix}
  		F_{11} 		& 0 		& 0 	 	&\cdots 	& 0 \\
  		F_{21} 		& F_{22} 		& 0	 	&\cdots	& 0\\
  		F_{31} 	& F_{32}		& F_{33} 	 	&\cdots 	& 0 \\
  		\vdots 	& \vdots	& \vdots 	&\ddots 	& \vdots \\
		F_{T1}	& F_{T2}& F_{T3}&\cdots 	& F_{TT} 
 	\end{bmatrix}
	\begin{bmatrix}
		y(0)\\ 	
		y(1)\\
		y(2)\\
		\vdots \\
		y(T-1)
	\end{bmatrix}
\end{align}
The output $\hat{x} = \F \B y$ is the optimal estimate of $x$ in the sense that the average of the mean squared error, is minimized:
$$
\frac{1}{T}\sum_{t=1}^T \E|x(t)-\hat{x}(t)|^2.
$$
Now let 
\begin{align} 
H
 =
 \begin{bmatrix}
  		b 		& 0 		& 0 	 	&\cdots 	& 0 \\
  		ab 		& b 		& 0	 	&\cdots	& 0\\
  		a^2b 	& ab		& b 	 	&\cdots 	& 0 \\
  		\vdots 	& \vdots	& \vdots 	&\ddots 	& \vdots \\
		a^T b		& a^{T-1}b& a^{T-2}b  &\cdots 	& b 
 	\end{bmatrix} ,
\end{align}
\begin{align} 
	G
	 =
	 \begin{bmatrix}
  		G_{11} 		& 0 		& 0 	 	&\cdots 	& 0 \\
  		G_{21} 		& G_{22} 		& 0	 	&\cdots	& 0\\
  		G_{31} 	& G_{32}		& G_{33} 	 	&\cdots 	& 0 \\
  		\vdots 	& \vdots	& \vdots 	&\ddots 	& \vdots \\
		G_{T1}	& G_{T2}& G_{T3}&\cdots 	& G_{TT} 
 	\end{bmatrix} ,
\end{align}
\begin{align} 
	F
	 =
	 \begin{bmatrix}
  		F_{11} 		& 0 		& 0 	 	&\cdots 	& 0 \\
  		F_{21} 		& F_{22} 		& 0	 	&\cdots	& 0\\
  		F_{31} 	& F_{32}		& F_{33} 	 	&\cdots 	& 0 \\
  		\vdots 	& \vdots	& \vdots 	&\ddots 	& \vdots \\
		F_{T1}	& F_{T2}& F_{T3}&\cdots 	& F_{TT} 
 	\end{bmatrix} ,
\end{align}

$$
x
=
	\begin{bmatrix}
		x(1)\\ 	
		x(2)\\
		x(3)\\
		\vdots \\
		x(T)
	\end{bmatrix}, \hspace{3mm}
w
 = 
	\begin{bmatrix}
		w(0)\\ 	
		w(1)\\
		w(2)\\
		\vdots \\
		w(T-1)
	\end{bmatrix},  \hspace{3mm}
z
 = 
	\begin{bmatrix}
		z(1)\\ 	
		z(2)\\
		z(3)\\
		\vdots \\
		z(T)
	\end{bmatrix},
$$
$$
~~~~~
\hat{x} = 
	\begin{bmatrix}
		\hat{x}(1)\\ 	
		\hat{x}(2)\\
		\hat{x}(3)\\
		\vdots \\
		\hat{x}(T)
	\end{bmatrix}, \hspace{3mm} 	
n = 
	\begin{bmatrix}
		n(0)\\ 	
		n(1)\\
		n(2)\\
		\vdots \\
		n(T-1)
	\end{bmatrix},  \hspace{3mm} 
	y= 
	\begin{bmatrix}
		y(0)\\ 	
		y(1)\\
		y(2)\\
		\vdots \\
		y(T-1)
	\end{bmatrix}.	
$$

Then,
$$
x = H w, ~~~~~  z = GHw, ~~~~~ y = GHw+n, ~~~~~ \hat{x} = Fy,
$$
$$
\sum_{t=1}^T \E|x(t)-\hat{x}(t)|^2 = \E |x-\hat{x}|^2 =  \E|H w - F y|^2
$$

%The power constraint on $z(t) = G_t Hw$ implies that 
%
%\begin{align*}
%1 &\geq \E \{(G_t H
% w
%)^*(G_t H
% w
%) \}  \\
%&= \E \{\tr[(G_t H
% w
%)(G_t H
% w
%)^*]\} \\
%&= \E \{\tr[G_t H
% w
% {w
%}^* H
%^* G_t^*]\} \\
%&= \tr[G_t H
% \E \{w
% {w
%}^*\} H
%^* G_t^*]\\
%&= \tr [G_t H
% H
%^* G_t^*]\\
%&= G_t H
% H
%^* G_t^*. 
%\end{align*}

After some algebra, the least mean square error for a linear precoder and decoder will be given by
\begin{equation}	
\label{linearcoding}
\inf_{\substack{ G, F\in\mathbb{L}\\ G_t H
 H^* G_t^* \leq P}} \E|H w - F(G H w + n)|^2 
\end{equation}
Note that the optimization problem above is inherently non-convex, since 
we have a coupling term between $G$ and $F$ in the quadratic objective function.

 %==========================================================================

 \section{Main Results}

\subsection{Optimal Transmission Scheme with Full State Information at The Transmitter}
The first result of this paper presents the structure of the optimal precoder and
decoder:

 \begin{theorem}
\label{main} 
The optimal linear communication scheme in Problem \ref{p1} is given by
  \begin{equation}
 \label{nofbeq}
 \begin{aligned}
	 \hat{x}(t) 	&= \E\{x(t) | y^{t-1}\}\\
 	%\tilde{x}(t) 	&= x(t)-\hat{x}(t)\\
 		z(t) 		&= \frac{\sqrt{P}}{\sigma_t} x(t),\\
 \end{aligned}
 \end{equation}
 where $\sigma_t^2=  \E|x(t)|^2$, for $t= 1, ..., T$.
\end{theorem}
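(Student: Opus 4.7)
My plan is to establish the decoder and encoder claims in turn, leveraging the Gaussian structure when we restrict to linear schemes. For the decoder, Proposition \ref{lse} gives that $\hat x(t) = \E\{x(t)|y^{t-1}\}$ is the MMSE-optimal map for \emph{any} fixed encoder, so this half of the claim is immediate. For the encoder, linearity forces $(x,z,y,\hat x)$ to be jointly Gaussian; the conditional expectation is then itself linear, and Proposition \ref{logerror} holds with equality, giving $\sigma_{\tilde x}^2(t) := \E|x(t)-\hat x(t)|^2 = (2\pi e)^{-1}\,2^{2h(x(t)|y^{t-1})}$.

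The core of the argument is to derive a Kalman-style recursion for $\sigma_{\tilde x}^2(t)$ and show it is minimized by the proposed encoder. For the prediction step I would use $x(t+1) = ax(t)+bw(t)$ with $w(t)\perp y^t$ (so $x(t)$ and $bw(t)$ are conditionally independent given $y^t$) and apply Proposition \ref{entpineq} (Entropy Power Inequality, with equality in the Gaussian case) to obtain $2^{2h(x(t+1)|y^t)} = a^2\,2^{2h(x(t)|y^t)} + 2\pi e\,b^2$, i.e., $\sigma_{\tilde x}^2(t+1) = a^2\bar\sigma^2(t) + b^2$, where $\bar\sigma^2(t)$ denotes the MMSE of $x(t)$ given $y^t$. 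For the update step I would introduce innovations $\tilde x(t) = x(t)-\E\{x(t)|y^{t-1}\}$ and $\tilde z(t) = z(t)-\E\{z(t)|y^{t-1}\}$, both orthogonal to $y^{t-1}$; the genuinely new channel information beyond $y^{t-1}$ is then $\tilde z(t)+n(t)$, and a direct linear-MMSE calculation yields $\bar\sigma^2(t) = \sigma_{\tilde x}^2(t) - (\E\tilde x(t)\tilde z(t))^2 / (\E\tilde z(t)^2 + N)$.

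The remaining task is to maximize $(\E\tilde x(t)\tilde z(t))^2/(\E\tilde z(t)^2+N)$ over admissible linear encoders $z(t) = \sum_{k\leq t} G_{tk}\,x(k)$ subject to $\E z(t)^2 \leq P$. Writing $\tilde z(t) = \sum G_{tk}\check x(k)$ with $\check x(k) := x(k)-\E\{x(k)|y^{t-1}\}$, and observing that the $\check x(k)$ are linearly independent Gaussian random variables while $\tilde x(t) = \check x(t)$, the Cauchy--Schwarz bound $(\E\tilde x(t)\tilde z(t))^2 \leq \sigma_{\tilde x}^2(t)\cdot\E\tilde z(t)^2$ is saturated iff $\tilde z(t) \propto \tilde x(t)$, which forces $G_{tk}=0$ for $k<t$, i.e., $z(t)=\alpha x(t)$; the power constraint $\alpha^2\sigma_t^2 \leq P$ then pins $\alpha^2 = P/\sigma_t^2$, matching the claimed encoder. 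Propagating this optimum through the one-step recursion yields optimality for all $t$. The hard part I anticipate is ruling out \emph{imperfectly aligned} encoders that might trade alignment for a larger innovation variance $\E\tilde z(t)^2$; handling this cleanly requires a careful analysis of the feasible region of the pair $(\E\tilde x(t)\tilde z(t),\,\E\tilde z(t)^2)$ as the $G_{tk}$ range over the power-constrained set, showing that its extremal points lie along the aligned direction.
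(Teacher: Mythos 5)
Your decoder half and your reduction of the encoder design to maximizing $\bigl(\E\{\tilde x(t)\tilde z(t)\}\bigr)^2/\bigl(\E\{\tilde z^2(t)\}+N\bigr)$ are sound and run parallel to the paper's argument, which reaches the same point through an information-theoretic chain (data-processing inequality, the Gaussian channel capacity $\tfrac12\log_2(1+P'/N)$ with $P'=P-\E\{\hat z^2(t)\}$, and Proposition \ref{logerror}) instead of second moments. The genuine gap is exactly the step you flag and then defer: ruling out imperfectly aligned encoders. That is not a loose end but the entire content of the theorem, and Cauchy--Schwarz cannot close it. Your objective is increasing separately in the correlation $\E\{\tilde x\tilde z\}$ and in the innovation power $Q=\E\{\tilde z^2\}$, and these are in real tension: the memoryless encoder $z(t)=\frac{\sqrt P}{\sigma_t}x(t)$ is perfectly aligned but spends part of its budget on $\hat z(t)=\frac{\sqrt P}{\sigma_t}\E\{x(t)\mid y^{t-1}\}$, so it attains only $Q=P\,\E\{\tilde x^2(t)\}/\sigma_t^2<P$, whereas an encoder with memory can be made (nearly) orthogonal to $y^{t-1}$ and push $Q$ toward $P$ at the cost of imperfect alignment. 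Saturating Cauchy--Schwarz maximizes the numerator for \emph{fixed} $Q$; it says nothing about the ratio when $Q$ varies over the feasible set.

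Moreover, the "careful analysis of the feasible region" you postpone would not come out the way you need: the trade-off can favor misalignment. Take $a=b=P=N=1$, $z(1)=x(1)$, and $z(2)=\alpha x(2)+\beta x(1)$; the power constraint reads $\alpha^2+(\alpha+\beta)^2\le 1$, one finds $\E\{\tilde x^2(2)\}=3/2$, and the stage-two MMSE reduction equals $\bigl(\alpha+\tfrac{\alpha+\beta}{2}\bigr)^2\big/\bigl(\alpha^2+\tfrac{(\alpha+\beta)^2}{2}+1\bigr)$. At $\beta=0$, $\alpha=1/\sqrt2$ (the claimed optimum) this is $9/14\approx0.643$, while at $\alpha=\sqrt{0.7}$, $\alpha+\beta=\sqrt{0.3}$ it is $\approx0.667$, so the aligned memoryless encoder is not even the stage-wise optimum among power-feasible linear encoders. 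Be aware that the paper's own proof does not resolve this either: its final bound $\E\{|\tilde x(t|t)|^2\}\ge\frac{N}{2\pi e(P'+N)}2^{2h(\tilde x(t))}$ depends on the encoder through $P'$, and minimizing an encoder-dependent lower bound is not the same as minimizing the error. So your outline correctly isolates the critical step, but that step is absent from your argument, and your own framing of it (in second-moment language) makes visible why it cannot be filled in by the alignment argument alone.
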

 \begin{proof}
The proof is differed to the appendix.
\end{proof}
 
 The theorem above implies that the optimal filters are finite. Clearly, the filter $\G$ is static, and thus, has no memory, and $\F$ is simply the Kalman filter that estimates the state of the linear process (\ref{linearsystem}) given the output measurement $$y(t)= \frac{\sqrt{P}}{\sigma_t} x(t)+n(t).$$
 
%-------------------------------------------------------------------------------------------------------------------------------------

\subsection{Time-Varying Systems}

The results considered so far treated the case where the state stems from a linear time invariant system. It's straight forward to verify that the results hold when we replace the parameters static $a, b, P, N$ with time varying parameters $a(t), b(t), P(t), N(t)$, respectively.

%-------------------------------------------------------------------------------------------------------------------------------------

\subsection{Separation Principle for Optimal Communication}
Consider the linear system
\begin{align*} 
	x(t+1) 		&= ax(t) + bw(t) \\
	\gamma(t)	&= cx(k) + dv(t)
\end{align*}
for $0\leq t\leq T-1$, with $x(0)=0$ and $(w(t),v(t))$ is white Gaussian noise process with a given covariance. We assume now that the transmitter does't have
access to the state $x(t)$ but $\gamma(t)$ instead. We get the following problem.

\begin{problem}
\label{p2}
Consider the linear system
\begin{align*} 
	x(t+1) 	&= ax(t) + bw(t) \\
	\gamma(t)	&= cx(k) + dv(t)
\end{align*}
$x(0)=x_0$, $0\leq t\leq T-$1, where $a,b \in\R$, and
$$
\E 
\begin{bmatrix}
	w(t)\\
	v(t)	
\end{bmatrix}
\begin{bmatrix}
	w(t)\\
	v(t)	
\end{bmatrix}^\intercal =
\begin{bmatrix}
	V_{ww}(t) & V_{wv}(t)\\
	V_{vw}(t) & V_{vv}(t)	
\end{bmatrix}
$$
is given for $0\leq t\leq T$.
Let $n$ be a white Gaussian noise process independent  of $w$, with $n(t)\sim \mathcal{N}(0, N)$. Find an optimal precoder and decoder pair 
\begin{equation}
 \label{nonlin2}
 \begin{aligned}
 z(t) 		&= g_t(\gamma^t)\\
 y(t)		&= z(t)+n(t)\\
 \hat{x}(t) 	&= f_t(y^{t-1})
 \end{aligned}
 \end{equation}
 such that
$$
\frac{1}{T}\sum_{t=1}^T \E|x(t)-\hat{x}(t)|^2 \rightarrow \min,
$$
where $y(0) = 0$.
 \end{problem}

The optimal linear transmission scheme is for the transmitter to find the best estimate of $x(t)$ based on $\gamma^t$, namely $\breve{x}(t) = \E\{x(t) | \gamma^t\}$, and then use this estimate as the state to be transmitted using the optimal communication scheme for the case of full state measurement at the transmitter given by Theorem \ref{main}. 

 \begin{theorem}
 \label{ofb}
 \label{mainnofb} 
The optimal linear communication scheme in Problem \ref{p2} is given by
  \begin{equation}
 \label{nofbeq}
 \begin{aligned}
 	\breve{x}(t) &= \E\{x(t) | \gamma^t\}\\
	\hat{x}(t) 	&= \E\{\breve{x}(t) | y^{t-1}\}\\
 	%\tilde{x}(t) 	&= x(t)-\hat{x}(t)\\
 		z(t) 		&= \frac{\sqrt{P}}{\sigma_t} \breve{x}(t),\\
 \end{aligned}
 \end{equation}
 where $\sigma_t^2=  \E|\breve{x}(t)|^2$, for $t= 1, ..., T$,
 $$
\breve{x}(t+1) = a\breve{x}(t) + \beta(t) \omega(t)
 $$

with $\omega(t) \sim \mathcal{N}(0,1)$, $V_{\xi \xi}(0) = 0$,
 %$$\sigma_t^2= V_{xx}(t) -2V_{sx}(t)+ V_{ss}(t),$$

\begin{equation*} 
%\label{}
\begin{aligned}
L(t)					&= V_{\xi \xi}(t)c(c^2V_{\xi \xi}(t) + d^2V_{vv}(t) )^{-1} \\ 
V_{\xi \xi}(t+1) 	&=  (a-aL(t)c)^2 V_{\xi \xi}(t) \\
					&+
\begin{bmatrix}
	b & -aL(t)
\end{bmatrix}
\begin{bmatrix}
	V_{ww}(t) & V_{wv}(t)\\
	V_{vw}(t) & V_{vv}(t)	
\end{bmatrix}
\begin{bmatrix}
	b & -aL(t)
\end{bmatrix}^\intercal
\end{aligned}
\end{equation*}

\begin{equation*} 
	\begin{aligned}
	\beta^2(t) &= L^2(t+1)  (c^2 V_{\xi \xi}(t+1)  + d^2 V_{vv}(t+1) )
	\end{aligned}
\end{equation*}
\end{theorem}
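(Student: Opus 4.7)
The plan is to prove Theorem~\ref{ofb} by a separation argument: split the total mean squared error into a transmitter-side Kalman filtering error and a residual channel-communication error, and then reduce the latter to an instance of Problem~\ref{p1} that is already solved by Theorem~\ref{main}.

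First I would restrict attention to linear encoders and decoders, so that every signal in the loop is jointly Gaussian. Introduce the Kalman estimate $\breve{x}(t) = \E\{x(t) \mid \gamma^t\}$ at the transmitter and the corresponding error $\xi(t) = x(t) - \breve{x}(t)$. By Proposition~\ref{lse}, $\xi(t)$ is orthogonal to every linear function of $\gamma^t$. Since the encoder is a causal linear map of $\gamma^t$ and the channel noise $n$ is independent of $(w,v)$, the observations $y^{t-1}$ are jointly Gaussian with $\gamma^{t-1}$ and uncorrelated with, hence independent of, $\xi(t)$. Consequently, for any decoder output $\hat{x}(t)$ that is a linear function of $y^{t-1}$ we obtain the orthogonal decomposition
\begin{equation*}
\E|x(t) - \hat{x}(t)|^2 = \E|\xi(t)|^2 + \E|\breve{x}(t) - \hat{x}(t)|^2 .
\end{equation*}
The first term is the intrinsic Kalman error at the transmitter and does not depend on the encoder/decoder. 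The optimization therefore collapses to minimizing the second term, which is precisely a communication-estimation problem whose ``source'' is $\breve{x}(t)$ rather than $x(t)$.

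Next, I would invoke the innovation representation of the Kalman filter to exhibit $\breve{x}(t)$ as the state of a first-order scalar linear system driven by Gaussian white noise. The standard Kalman recursion applied to $(a,b,c,d)$ with the joint covariance of $(w,v)$ gives $\breve{x}(t+1) = a\breve{x}(t) + \beta(t)\omega(t)$ with $\omega(t)\sim\mathcal{N}(0,1)$ white, together with the stated formulas for the gain $L(t)$, the error variance $V_{\xi\xi}(t)$, and the innovation amplitude $\beta(t)$; these fall out of the one-step prediction and measurement update in the cross-correlated noise case. Since this is exactly the class of plants covered by Problem~\ref{p1}, Theorem~\ref{main} applies verbatim with $x$ replaced by $\breve{x}$: the optimal precoder is $z(t) = \tfrac{\sqrt{P}}{\sigma_t}\breve{x}(t)$ with $\sigma_t^2 = \E|\breve{x}(t)|^2$, and the optimal decoder is $\hat{x}(t) = \E\{\breve{x}(t) \mid y^{t-1}\}$.

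The main obstacle is the orthogonality/independence step that underlies the additive error decomposition: it only works because (i) the encoder is linear so that $\xi(t)$ remains orthogonal to every linear functional of $\gamma^{t-1}$, and (ii) the Gaussian joint law makes orthogonality equivalent to independence, so that no information about $\xi(t)$ leaks into $y^{t-1}$ despite the closed-loop structure created by the encoder feeding $\gamma$ back through the channel. Once this is in hand, identifying the state-space form of $\breve{x}(t)$ with white-noise drive is a direct consequence of the innovation property of the Kalman filter, and the explicit expressions for $L(t)$, $V_{\xi\xi}(t)$, and $\beta(t)$ are obtained by matching second moments in the one-step prediction/update equations.
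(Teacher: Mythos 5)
Your proposal is correct and follows essentially the same route as the paper: an orthogonality-based separation of the error into the transmitter-side Kalman error plus the decoder's error in estimating $\breve{x}(t)$, followed by the innovation representation $\breve{x}(t+1)=a\breve{x}(t)+\beta(t)\omega(t)$ and an application of Theorem~\ref{main} (in its time-varying form, with $b(t)=\beta(t)$). The only cosmetic difference is that you work with the filtering error $x(t)-\breve{x}(t)$ where the paper's $\xi(t)$ is the one-step prediction error $x(t)-\breve{x}(t|t-1)$, which if anything makes your error decomposition the more cleanly stated of the two.
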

\begin{proof}
The proof is differed to the appendix.
\end{proof}

\subsection{Optimality of Linear Filters}
We have considered encoder-decoder design based on linear filters in the previous section, and it's interesting to examine whether linear filters are optimal among the class of all filters, linear and nonlinear. In the special case where we restrict the the filter $\G$ to have an output sample $z(t)$ such that $\hat{z}(t) =  \E\{z(t)  | y^{t-1} \}$ and $\tilde{z}(t) = z(t) - \hat{z}(t)$ are not only uncorrelated according to Proposition \ref{lse}, but also \textit{independet} (see the proof of Theorem \ref{main}). What this constraint means in practice and how restrictive it could be is not known, but it forces the optimal filters to be linear.

%\subsection{The MIMO case}
%The MIMO case is basically identical to the first order dynamical system case presented in the previous sections.
 %==========================================================================
%\newpage
\section{Conclusions}
We considered the problem of optimal encoder-decoder filter design over a Shannon Gaussian channel under a \textit{per sample} power constraint, to transmit and estimate the state of a linear dynamical system.  
We first considered the problem of communication with full state measurements at the transmitter and showed that optimal linear encoders don't need to have memory and the optimal linear decoders have an order of at most that of the state dimension. We also showed explicitly the structure the optimal linear filters. For the case where the transmitter has access to noisy measurements of the state, we derived a separation principle for the optimal communication scheme, where the transmitter needs a filter with an order of at most the dimension of the state of the dynamical system.  The results were derived for first order linear dynamical systems, but may be generalized to MIMO systems with arbitrary order.

Future work includes the problem where noisy measurements are available from the decoder to the encoder through a noisy Gaussian feedback channel which is necessary in order to be able to track signals belonging to unstable systems as the time horizon goes to infinity.

\section{Acknowledgements}
The author is grateful to Prof. S. Yuksel and Dr. B. Lincoln for constructive and useful feedback.

%==========================================================================
\bibliography{../../ref/mybib}
%==========================================================================

%\appendix

\section*{Appendix}
 
%==========================================================================

\subsection*{Proof of Theorem \ref{main} }
  %\vspace{1mm}
%\begin{props}
%\label{g0m}
	Suppose that $\E \{g_t(x^t)\} = \alpha_t$ where $\{\alpha_k\}_{k=0}^t$ are
	deterministic real numbers independent of $x^t$ and are known at the encoder $g_t$ 
	and decoder $f_t$. 
%\end{props}
%\begin{proof}
%See the appendix. 
%\end{proof}
 Note that $y(t) = g_t(x^t)+n(t)$. The estimate of $x(t+1)$ based on $y(k)$, 
 $k=0, ..., t$, is the same as the estimate of $x(t+1)$ based on $y(k)-\alpha_k$ for $k=0, ..., t$ since $\alpha_k$ is deterministic and known at the decoder.
But it means that we can replace $g_t(x^t)$ with $g_t'(x^t)=g(x^t)-\alpha_t$, and $g_t'(x^t)$ satisfies both $\E \{g_t'(x^t)\} = 0$ and the power constraint $\E|g_t'(x^t)|^2 \leq P$ since
\begin{equation*}
% \label{nofbeq}
 \begin{aligned}
\E|g_t'(x^t)|^2 	&= \E|g_t(x^t)-\alpha|^2 \\
			&= \E |g_t(x^t)|^2 - \alpha^2\\ 
			&= P-\alpha^2\leq P.
 \end{aligned}
 \end{equation*} 
 Thus, without loss of generality,  we may restrict the encoders $g$ to the set 

$$\{ g~ | ~\E \{g(x^t)\} = 0\}.$$ 

Let $\hat{x}(t|t) = f'_t(y^t)$ be the optimal estimate of $x(t)$ based on $y^t$ and  
let $\tilde{x}(t|t)  = x(t) - \hat{x}(t|t)$, for $t=0, ..., T$. We have that $f'_t(y^t) = \E\{x(t) | y^t\}$ according to Proposition \ref{lse}. Now we have that
\begin{equation}
 \label{nofbestimate1}
 	\begin{aligned}
		 \hat{x}(t|t) 	&= \E\{x(t)|y^t\}\\
		 				&=  \E\{(\hat{x}(t) + \tilde{x}(t)  | y^{t}\}\\
						&= \hat{x}(t) + \E\{\tilde{x}(t)  | y^t \},
					%&= \hat{x}(t) + \E\{\tilde{x}(t)  | y(t) \}\\
 	\end{aligned}
 \end{equation} 
%where the last equality follows from the fact that $\tilde{x}(t)$ is independent of $y^{t-1}$.
\begin{equation}
 \label{nofbestimate2}
 	\begin{aligned}
\tilde{x}(t+1) 	&= x(t+1) -\hat{x}(t+1) \\
				&= ax(t)+bw(t)-a\hat{x}(t|t)\\ &= a\tilde{x}(t|t) + bw(t)\\
				%&= a(\tilde{x}(t) - \E\{\tilde{x}(t)  | y^t \}) + bw(t)\\
\end{aligned}
 \end{equation} 
We see that minimizing $\E |\tilde{x}(t+1)|^2$ is equivalent to minimizing the mean square error of
$$
\tilde{x}(t|t) = \tilde{x}(t) - \E\{\tilde{x}(t)  | y^t \}
$$
at the decoder. Note that $\tilde{x}(t)$ and $y^{t-1}$ are jointly Gaussian. 
Since they are uncorrelated, they are also independent.

Recall that $z(t) = g_t(x^t)$. Let $\hat{z}(t) =  \E\{z(t)  | y^{t-1} \}$ and $\tilde{z}(t) = z(t) - \hat{z}(t)$. 

Now we have that
$$
P\geq \E{z^2(t)} = \E{ \hat{z}^2(t)} + \E{\tilde{z}^2(t)} 
$$
with equality if 
$$ \E{\tilde{z}^2(t)} = P-\E \hat{z}^2(t)=:  P'$$ 
Let $k$ be such that $k^2\E{\tilde{x}^2(t)} = P'$. 
The data processing inequality (Proposition \ref{dataproc}) together with the Shannon capacity of a Gaussian channel ( \cite{gallager}) gives an upper bound on the mutual information between $\tilde{x}(t)$ and $\tilde{z}(t) + n(t)$
 \begin{equation}
 \label{eq1}
	\begin{aligned}
 		I(\tilde{x}(t); \tilde{z}(t) + n(t))
											&\leq I(\tilde{z}(t); \tilde{z}(t) + n(t))\\
											&=	 \frac{1}{2} \log_2{\left(1+\frac{P'}{N}\right)}						
	\end{aligned}	
 \end{equation} 
 Equality holds if and only if $\tilde{z}(t) = k \tilde{x}(t)$, that is if and only if $z(t) = K(t)x(t)$ with
 $K(t) = \frac{\sqrt{P}}{\sigma_t}$, $\sigma_t^2 = \E|x(t)|^2$. 
 
 Equation (\ref{eq1}) is equivalent to
\begin{equation}
\label{nofbestimate6}
	2^{-2I(\tilde{z}(t); y(t))} = \frac{N}{P'+N}
\end{equation}
Now we get
\begin{eqnarray}
 	%\begin{aligned}
		2\pi e \E\{|\tilde{x}(t|t)|^2\}	
							&=& 2^{2h(\tilde{x}(t) | y^t)} \label{eq2} \\
							&=& 2^{2h(\tilde{x}(t) | z(t) + n(t), y^{t-1})} \label{eq3}\\
							&=& 2^{2h(\tilde{x}(t) | \hat{z}(t) + \tilde{z}(t) + n(t), y^{t-1})} \label{eq4}\\
							&=& 2^{2h(\tilde{x}(t) | \tilde{z}(t) + n(t))} \label{eq5}\\
							&=& 2^{2h(\tilde{x}(t)) - 2I(\tilde{x}(t);  \tilde{z}(t) + n(t))}  \label{eq6}\\
							&\geq& 2^{2h(\tilde{x}(t)) - 2I(\tilde{z}(t);  \tilde{z}(t) + n(t))} \label{eq7}\\
							&=& \frac{N}{P'+N} 2^{2h(\tilde{x}(t) )} \label{eq8}
 	%\end{aligned}
 \end{eqnarray} 
where (\ref{eq2}) follows from Proposition \ref{logerror}(equality holds since $\tilde{x}(t)$ and $y^t$ are jointly Gaussian), (\ref{eq5}) follows from the fact that $y^{t-1}$ is independent of $(\tilde{x}(t), \tilde{z}(t), n(t))$,    
(\ref{eq6}) follows from the definition of mutual information, (\ref{eq7}) follows from (\ref{eq1}), and (\ref{eq8}) follows from (\ref{nofbestimate6}). Thus, the error 
$\E\{|\tilde{x}(t|t)|^2\}$ has a lower bound given by 
$$
\E\{|\tilde{x}(t|t)|^2\} \geq \frac{N}{2\pi e(P'+N)} 2^{2h(\tilde{x}(t) )}
$$
with equality if
$$z(t) = \frac{\sqrt{P}}{\sigma_t} x(t)$$
with $\sigma_t^2=  \E |x(t)|^2$. This completes the proof.

%--------------------------------------------------------------------------------------------------------------------
\subsection*{Proof of Theorem \ref{ofb}}

Define the estimates $\breve{x}(t) = \E\{x(t) | \gamma^t\}$ and 
$\breve{x}(t|t-1) = \E\{x(t) | \gamma^{t-1}\}$. Let
$$\xi(t) = x(t) - \breve{x}(t|t-1).$$ 
It's well known that $\breve{x}(t)$ is given by the Kalman filter
\begin{equation} 
\label{transs}
\begin{aligned}
	\breve{x}(t)		&= \breve{x}(t|t-1) + L(t)(c\xi(t)+dv(t))\\
	\breve{x}(t+1|t)	&= a\breve{x}(t)\\
						&= a\breve{x}(t|t-1) + aL(t)(c\xi(t)+dv(t))\\
	\xi(t+1) 			&= (a-aL(t)c)\xi(t) + bw(t) - aL(t)dv(t) 
\end{aligned}
\end{equation}
where $L(t)$ is the optimal Kalman filter gain(see, e. g., \cite{astrom:1970}): 
\begin{equation*} 
%\label{}
\begin{aligned}
L(t)					&= V_{\xi \xi}(t)c(c^2V_{\xi \xi}(t) + d^2V_{vv}(t) )^{-1} \\ 
V_{\xi \xi}(t+1) 	&=  (a-aL(t)c)^2 V_{\xi \xi}(t) \\
					&+
\begin{bmatrix}
	b & -aL(t)
\end{bmatrix}
\begin{bmatrix}
	V_{ww}(t) & V_{wv}(t)\\
	V_{vw}(t) & V_{vv}(t)	
\end{bmatrix}
\begin{bmatrix}
	b & -aL(t)
\end{bmatrix}^\intercal
\end{aligned}
\end{equation*}

We also know that $\gamma^{t-1}$ and $\xi(t)$ are uncorrelated according to Proposition \ref{lse}. This implies in turn that $y^{t-1}$ and $\xi(t)$ are uncorrelated.
Hence, the averaged estimation error of the decoder is equal to

$$
\frac{1}{T}\sum_{t=1}^T \E|x(t)-\hat{x}(t)|^2 = 
\frac{1}{T}\sum_{t=1}^T \left(\E|\breve{x}(t)-\hat{x}(t)|^2 + \E|\xi(t)|^2\right).
$$

Obviously, the decoder can't do much about the error covariance $\E|\xi(t)|^2$.
The decoder $\F$ minimizes the averaged estimation error above if and only if it minimizes the
averaged estimation error of $\breve{x}(t)$. 
Thus, we have transformed the output measurement problem to a state measurement 
problem at the encoder $\G$, where the measured state is the state $\breve{x}(t)$ 
of the linear time-varying dynamical system given by
\begin{equation*} 
\label{transs2}
	\begin{aligned}
	\breve{x}(t+1)		&= \breve{x}(t+1|t) + L(t+1)(c\xi(t+1)+dv(t+1))\\
						&= a\breve{x}(t) + L(t+1)(c\xi(t+1)+dv(t+1))\\
						&= a\breve{x}(t) + \beta(t) \omega(t)
	\end{aligned}
\end{equation*}
with $\omega(t) \sim \mathcal{N}(0,1)$ and
\begin{equation*} 
	\begin{aligned}
	\beta^2(t) &= L^2(t+1)\E\{(c\xi(t+1)+dv(t+1))^2\} \\
				&= L^2(t+1)  (c^2 V_{\xi \xi}(t+1)  + d^2 V_{vv}(t+1) )
	\end{aligned}
\end{equation*}
Inserting $b(t) = \beta(t)$ in Problem \ref{p1} and using
Theorem \ref{main} concludes the proof.

\end{document}